\title{Refactoring Delta-Oriented Product Lines to Achieve Monotonicity%
  \thanks{The authors of this paper are listed in alphabetical order. 
This work has been partially supported by
  project HyVar (\url{www.hyvar-project.eu}), which has received funding from the European Union's Horizon 2020 research and
 innovation programme under grant agreement No.\ 644298; by ICT COST Action IC1402 ARVI (\url{www.cost-arvi.eu}); and by Ateneo/CSP D16D15000360005 project RunVar.}}
\author{Ferruccio Damiani
\institute{University of Turin (Italy)}
\email{ferruccio.damiani@di.unito.it}
\and
Michael Lienhardt
\institute{University of Turin (Italy)}
\email{michael.lienhardt@di.unito.it}
}
\date{February 8, 2016}
\begin{document}

\maketitle


\begin{abstract}
Delta-oriented programming (DOP) is a flexible transformational approach to implement software product lines.
In delta-oriented product lines, variants are generated by applying operations contained in delta modules to a (possibly empty) base program.
These operations can add, remove or modify named elements in a program (e.g., classes, methods and fields in a Java program).
This paper presents algorithms for refactoring a delta-oriented product line into monotonic form,
 i.e., either to contain add and modify operations only (monotonic increasing) or to contain remove and modify operations only (monotonic decreasing).
Because of their simpler structure, monotonic delta-oriented product lines are easier to 
analyze.
The algorithms are formalized by means of a core calculus for DOP of product lines of Java programs and their correctness and complexity are given.
%
\end{abstract}

\section{Introduction}

%

A {\em Software Product Line} (\SPL) is a set of similar programs, called {\em variants}, that have a well documented variability and are generated from
 a common code base~\cite{Clements:2001}. 
\emph{Delta-Oriented Programming} (\DOP)~\cite{SPLC,BetDamSch:ACTA-2013} is a  flexible and modular transformational approach to implement \SPL s.
A \DOP\ product line comprises a \emph{Feature Model} (\FM), a \emph{Configuration Knowledge} (\CK), and an \emph{Artifact Base} (\AB).
The \FM\ provides an abstract description of variants in terms of \emph{features} (each representing an abstract description of functionality):
 each variant is described by a set of features, called a \emph{product}. 
The \AB\ provides the (language dependent) code artifacts used to build the variants, namely:
 a (possibly empty) base program from which variants are obtained by applying program transformations, described by {\em delta modules}, that can add, remove or modify code.
The \CK\ provides  a mapping from products to variants by describing the connection between the code artifacts in the \AB\ and the features in the \FM:
 it associates to each delta module an \emph{activation condition} over the features and specifies an \emph{application ordering} between delta modules.
\DOP\ supports automated product derivation, i.e., once the features of a product are selected,
 the corresponding variant is generated by applying the activated delta modules to the base program according to the application ordering.

Delta modules are constructed from {\em delta operations} that can {\em add}, {\em modify} and {\em remove} content to and from the base program
 (e.g., for Java programs, a delta module can add, remove or modify classes interfaces, fields and methods). 
As pointed out in~\cite{FOSD}, such flexibility allows \DOP\ to support
 {\em proactive} (i.e., planning all products in advance),
 {\em reactive} (i.e., developing an initial \SPL\ comprising a limited set of products and evolving it as soon as new products are needed or new requirements arise),
 and {\em extractive} (i.e., gradually tranforming a set of existing programs into an \SPL) \SPL\ development~\cite{Krueger02}.
%
\DOP\  allows for quick \SPL\ evolution and extension, as modifying or adding products/variants can straightforwardly be achieved by
 adding to the \SPL\ new 
delta modules that modify, remove and add code on top of the original implementation of the \SPL.
However, a number of such \SPL\ evolution and extension phases lead, almost ineluctably, to a multiplication of opposite add and remove operations,
 making the resulting \SPL\ complex, difficult to understand and to analyze~\cite{Schulze:2013:RDS:2451436.2451446}.


Refactoring~\cite{DBLP:conf/xpu/Fowler02} is an established technique to reduce complexity and to prevent the process of software aging,
 and consists of program transformations that change the internal structure of a program without altering its external (visible) behavior.
Refactoring for \DOP\ product lines, i.e. changing the internal structure of an \SPL\ without changing its products/variants,
 has been investigated in~\cite{Schulze:2013:RDS:2451436.2451446}.
There, a catalogue of refactoring algorithms and code smells is presented.
Most of these refactorings are based on object-oriented refactorings~\cite{DBLP:conf/xpu/Fowler02}.
In particular, the refactorings that refer to delta modules focus on a single delta module or a pair of delta modules.


In this paper, we propose two new refactoring algorithms to automatically  eliminate opposite add and remove operations across the whole \SPL,
 consequently reducing 
the overall complexity of the refactored \SPL\ and making it easier to analyze.
These algorithms are constructed around the notion of {\em monotonicity} where
 {\em increasing monotonic} \SPL\ corresponds to only adding new content to the base program,
 while {\em decreasing monotonic} \SPL\ corresponds to only removing content from the base program.
These two notions of monotonicity are discussed in Section~\ref{sec:algo}, where we propose several definitions with different versions of these concepts.
%
%
The refactoring algorithms do not introduce code duplication in the \AB\ of the refactored SPL and have  at most quadratic
 complexity in space and time.
We formalize the notions of monotonicity and the refactoring algorithms by means of {\sc Imperative Featherweight Delta Java} (\IFDJ)~\cite{BetDamSch:ACTA-2013},
 a core calculus for \DOP\ product lines where variants are written in an imperative  version of \textsc{Featherweight Java} (\FJ)~\cite{IgarashiPierceWadler:TOPLAS-2001}.
%
A prototypical implementation of the refactoring algorithms is available at~\cite{delta-ra-impl-IFDJ}.

 Section~\ref{sec:example} introduces our running example.
 Section~\ref{sec:model} recalls \IFDJ.
 Section~\ref{sec:auxdef} introduces some auxiliary notations.
 Section~\ref{sec:algo} illustrates the notions of monotonicity, the refactoring algorithms, and their properties.
 Section~\ref{sec:related} briefly discusses the related work and Section~\ref{sec:conclusion} concludes the paper.

%

\section{Example}\label{sec:example}

In order to illustrate the monotonicity concept and our refactoring algorithms,
 we use a variant of the \emph{expression product line} (\EPL) benchmark (see, e.g.,~\cite{Lopez-HerrejonBC05,BetDamSch:ACTA-2013}).
We consider the following grammar:
{\small\begin{lstlisting}[basicstyle=\sffamily\footnotesize]
$\;$ Exp  ::=  Lit | Add | Neg   $\qquad$    Lit  ::=  <integers>    $\qquad$    Add ::=  Exp "+" Exp    $\qquad$    Neg ::=  "-" Exp  
\end{lstlisting}\vspace*{-.3em}}
\noindent
Two different operations can be performed on the expressions described by this grammar:
printing, which returns the expression as a string, and evaluating, which returns the value of the expression, either as an int or as a literal expression.

\subsection{The Feature Model}

The functionalities in the EPL can be described by two sets of features:
 the ones concerned with the data are \textsf{Lit} (for literals), \textsf{Add} (for the addition) and \textsf{Neg} (for the negation);
 the ones concerned with the operations are \textsf{Print} (for the classic {\tt toString} method), \textsf{Eval1} (for the {\tt eval} method returning an int)
 and \textsf{Eval2} (for the {\tt eval} method returing a literal expression).
The features \textsf{Lit} and \textsf{Print} are mandatory, while \textsf{Add}, \textsf{Neg}, \textsf{Eval1} and \textsf{Eval2} are optional.
Moreover, as \textsf{Eval1} and \textsf{Eval2} define the same method, they are mutually exclusive.
Figure~\ref{fig:featuremodel} shows the feature model 
 of the \EPL\ represented as a feature diagram.
\begin{figure}[tbp]
\centering
\includegraphics[scale=.3]{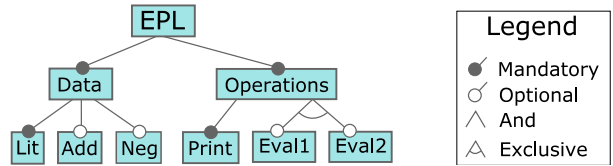}
\caption{Expression Product Line: Feature Model}\label{fig:featuremodel}
\end{figure}

\subsection{The Artifact Base}\label{sec:example-AB}

\paragraph{Base Program.}

In our example, the \EPL\ is constructed from the base program shown in Figure~\ref{fig:DLitAddNegPrint}, which is the variant implementing features 
 \textsf{Lit}, \textsf{Add} and \textsf{Print}.
\begin{figure}
\begin{minipage}[t]{.46\textwidth}
\begin{lstlisting}
class Exp extends Object {  // only used as a type
  String toString() { return null; }
}
class Lit extends Exp {
  int value;
  Lit setLit(int n) { value = n; return this; }
  String toString() { return value + ""; }
}
\end{lstlisting}
\end{minipage}\begin{minipage}[t]{.53\textwidth}
\begin{lstlisting}
class Add extends Exp {
  Exp expr1;
  Exp expr2;
  Add setAdd(Exp a, Exp b) { expr1 = a; expr2 = b; return this; }
  String toString() { return expr1.toString() + 
                               " + " + expr2.toString(); }
}
\end{lstlisting}
\end{minipage}\vspace*{-1em}
\caption{Base Program}\label{fig:DLitAddNegPrint}
\end{figure}
This program comprises the class $\name{Exp}$, the class $\name{Lit}$ for literal expressions and the class $\name{Add}$ for addition expressions.
All these classes implement the $\name{toString}$ method.
Moreover, $\name{Lit}$ and $\name{Add}$ also have a setter method.

\paragraph{Implementing Feature Neg.}
 Figure~\ref{fig:deltaNeg} presents the three delta modules  (introduced by the keyword $\kwdelta$)
that add the feature \textsf{Neg} to the base program.
\begin{figure}
\begin{minipage}[t]{.49\textwidth}\begin{lstlisting}
delta DNeg  {
   adds class Neg extends Exp {
      Exp expr;
      Neg setNeg(Exp a) { expr = a; return this; }
   }
}
delta DNegPrint {
   modifies Neg {
      adds String toString() {
       return "-" +  expr.toString(); }
   }
}
\end{lstlisting}\end{minipage}\begin{minipage}[t]{.49\textwidth}\begin{lstlisting}
delta DOptionalPrint {
   modifies Add {
      modifies String toString() {
       return "(" +  original() + ")"; }
   }
}
\end{lstlisting}\end{minipage}\vspace*{-1em}
\caption{Delta Modules for the \textsf{Neg} Feature}\label{fig:deltaNeg}
\end{figure}
Namely:  $\name{DNeg}$ adds the class \texttt{Neg} with a simple setter;
 $\name{DNegPrint}$ adds to class \texttt{Neg} the $\name{toString}$ method (relevant for the \textsf{Print} feature);
  and 
$\name{DOptionalPrint}$ 
adds glue code to ensure  that the two optional features  \textsf{Add} and \textsf{Neg} cooperate properly: 
it {\em modifies} the implementation of the $\name{toString}$ method of the class $\name{Add}$ by putting parentheses around the textual representation of a sum expression,
 thus avoiding ambiguity in printing.
E.g., without applying $\name{DOptionalPrint}$ both the following expressions 
\begin{lstlisting}
(new Add()).setAdd( new (Neg()).setNeg((new  Lit()).setLit(3)),  new (Lit()).setLit(5) )     //  (-3) + 5
(new Neg()).setNeg( new (Add()).setAdd((new  Lit()).setLit(3), new (Lit()).setLit(5)) )     //  -(3+5)
\end{lstlisting}
 would be printed as ``-3+5''; while after applying $\name{DOptionalPrint}$ the former is printed as ``(-3+5)'' and the latter is printed as ``-(3+5)''.
Delta module $\name{DOptionalPrint}$ illustrates the usage of the special method $\kworiginal$ which allows here to call the original implementation of the method $\name{toString}$,
  and surround the resulting string with parenthesis.

\paragraph{Implementing Features Eval1 and Eval2.}

Figure~\ref{fig:deltaEval} presents the delta modules 
 that add the features \textsf{Eval1} and \textsf{Eval2} (on the left and on the right, respectively).
\begin{figure}
\begin{minipage}[t]{.46\textwidth}\begin{lstlisting}
delta DLitEval1 {
   modifies Exp {
      adds int eval() { return 0; }
   }
   modifies Lit {
      adds int eval() { return value; }
   }
}
delta DAddEval1 {
   modifies Add {
      adds int eval() {
        return expr1.eval() + expr2.eval();
      }
   }
}
delta DNegEval1{
   modifies Neg {
      adds int eval() { return (-1) * expr.eval(); }
   }
}
\end{lstlisting}\end{minipage}\begin{minipage}[t]{.53\textwidth}\begin{lstlisting}
delta DLitEval2 {
   modifies Exp {
      adds Lit eval() { return null; }
   }
   modifies Lit {
      adds Lit eval() { return this; }
   }
}
delta DAddEval2 {
   modifies Add {
      adds Lit eval() {
        Lit res = exp1.eval(); 
        return res.setLit(res.value + exp2.eval()); }
   }
}
delta DNegEval2{
   modifies Neg {
      adds Lit eval() { Lit res = exp.eval();
        return res.setLit((-1) * res.value); }
   }
}
\end{lstlisting}\end{minipage}\vspace*{-1em}
\caption{Delta Modules for Features \textsf{Eval1} (left) and \textsf{Eval2} (right)}\label{fig:deltaEval}
\end{figure}
The delta module $\name{DLitEval1}$ (resp.\ $\name{DLitEval2}$) modifies the classes $\name{Exp}$ and $\name{Lit}$
  by adding to them the $\name{eval}$ method corresponding to the \textsf{Eval1} (resp.\ \textsf{Eval2}) feature:
  $\name{eval}$ takes no parameter and returns an int (resp.\ a {\tt Lit} object).
The delta module $\name{DAddEval1}$ (resp.\ $\name{DAddEval2}$) does the same operation on the $\name{Add}$ class;
 and the delta module $\name{DNegval1}$ (resp.\ $\name{DANegEval2}$) does the same operation on the $\name{Neg}$ class.

\paragraph{Removing the Add Feature.}
If the feature \textsf{Add} is not selected, the generated variant must not contain the class $\name{Add}$.
This is ensured by the delta module $\name{DremAdd}$ in Figure~\ref{fig:deltaRemAdd} which removes the class $\name{Add}$ from the program.
\begin{figure}
\begin{lstlisting}
delta DremAdd { removes Add }
\end{lstlisting}\vspace*{-1em}
\caption{Delta Module for Removing the \textsf{Add} Feature} \label{fig:deltaRemAdd}
\end{figure}

\subsection{The Configuration Knowledge}

The configuration knowledge specifies how variants are generated by
  i) specifying for which product (i.e., set of selected features) each delta module is activated,
  and ii) specifying a partial application order on the delta modules.
Figure~\ref{fig:EPL:CK} presents the activation conditions and the partial order of the delta modules.
\begin{figure}
{\scriptsize\begin{center}
\begin{tabular}{l}
  \begin{tabular}{@{}lc@{\qquad}c@{\qquad}c}
    Activations: & \begin{tabular}{@{}c@{}}\begin{tabular}{|l|l|}
    \hhline{+-|-+}
    {\bf Delta Module} & Activation \\
   \hhline{+-|-+}
    $\name{DNeg}$      & \textsf{Neg}                    \\
    $\name{DNegPrint}$ &\textsf{Neg} $\land$ \textsf{Print}     \\
    $\name{DOptionalPrint}$ & \textsf{Neg} $\land$ \textsf{Add} \\
    \hhline{+-|-+}
    \end{tabular}\\~\end{tabular} & \begin{tabular}{@{}c@{}}
    \begin{tabular}{|l|l|}
   \hhline{+-|-+}
    {\bf Delta Module} & Activation \\
    \hhline{+-|-+}
    $\name{DLitEval1}$ & \textsf{Eval1}                  \\
    $\name{DAddEval1}$ & \textsf{Eval1} $\land$ \textsf{Add}    \\
    $\name{DNegEval1}$ & \textsf{Neg} $\land$ \textsf{Eval1}    \\
    \hhline{+-|-+}
    \end{tabular}\\~\end{tabular} & \begin{tabular}{|l|l|}
    \hhline{+-|-+}
    {\bf Delta Module} & Activation \\
    \hhline{+-|-+}
    $\name{DLitEval2}$ & \textsf{Eval2}                  \\
    $\name{DAddEval2}$ & \textsf{Eval2} $\land$ \textsf{Add}    \\
    $\name{DNegEval2}$ & \textsf{Neg} $\land$ \textsf{Eval2}    \\
    $\name{DremAdd}$   & $\neg$\textsf{Add}              \\
    \hhline{+-|-+}
    \end{tabular}
  \end{tabular}\\[2em]
  Order: \begin{tabular}{l}
          $\name{DNeg} \isbefore \{\ \name{DNegPrint},\ \name{DOptionalPrint}\ \}$\\
           $\qquad\qquad \isbefore \{\ \name{DLitEval1},\ \name{DAddEval1},\ \name{DNegEval1}\  \} 
              \isbefore \{\ \name{DLitEval2},\ \name{DAddEval2},\ \name{DNegEval2}\  \}\isbefore \name{DremAdd}$
        \end{tabular}
\end{tabular}
\end{center}}
\vspace*{-1.2em}
\caption{Expression Product Line: Configuration Knowledge}\label{fig:EPL:CK}
\end{figure}
The activation conditions and the partial order reflect the  explanations about the delta modules of the \EPL\ given in Section~\ref{sec:example-AB}. For instance, the delta module {\tt DNeg} is activated whenever the feature \textsf{Neg} is activated, 
the delta module {\tt DremAdd} is activated whenever the feature \textsf{Add} is not selected, and the delta module {\tt DOptionalPrint} is activated
whenever both features  \textsf{Add} and \textsf{Neg} are activated (recall that feature \textsf{Print} is mandatory).

Following~\cite{BetDamSch:ACTA-2013}, the partial order is specified as a total order on a partition of the set of delta modules.
The partial order  must ensure that the variants of the \EPL\ can be generated.
  Therefore, it states that the delta modules  {\tt DNeg} (that adds the class {\tt Neg})
 must be applied before {\tt DNegPrint}, {\tt DNegEval1} and {\tt DNegEval2} (that modify class {\tt Neg}).
The partial order also ensures that, independently from the activation conditions, the delta modules occurring in the same partition perform disjoint delta operations (thus guranteeing that applying any subset of them in any possible order always produces the same transformation)---this guarantees that the product line is unambiguous  (i.e., applying  the activated delta modules in any possible total order that respects the application order produces the same variant).
Therefore, the delta modules for feature \textsf{Eval1} and the delta modules for feature  \textsf{Eval2} are put in two different parts;
 and the delta module {\tt DremAdd} (that removes the class {\tt Add}) is applied after  {\tt DAddEval1}, {\tt DAddEval2} and {\tt DOptionalPrint} (that modify class {\tt Add}).

\section{The \IFDJ\ Calculus}\label{sec:model}

In this section we briefly recall the \IFDJ~\cite{BetDamSch:ACTA-2013} core calculus for \DOP.
We present the calculus in two steps:
 (i) we introduce the \IFJ\ calculus, which is an imperative version of \FJ~\cite{IgarashiPierceWadler:TOPLAS-2001}; and
 (ii) we introduce the constructs for variability on top of it.
The full descriprion of  \IFDJ\ is given in~\cite{BetDamSch:ACTA-2013}, where a type-cheching technique for ensuring type soundness of all variants is presented.
The version of \IFDJ\ presented in this paper is indeed a slight extension of the one presented in~\cite{BetDamSch:ACTA-2013}:
 the \AB\ contains also an \IFJ\ program outside of any delta module. This makes the \IFDJ\ syntax a direct extension of the \IFJ\ syntax.


The abstract syntax of \IFJ\  is presented in Figure~\ref{fig:IFJ:syntax} (top).
Following~\cite{IgarashiPierceWadler:TOPLAS-2001}, we use the overline notation for (possibly empty) sequences of elements:
 for instance $\vect{\see}$ stands for a sequence of expressions.
Variables $\namex$ include the special variable $\kwthis$  (implicitly bound in any method declaration $\seMD$),
 which may not be used as the name of a formal parameter of a method. 
\begin{figure}[t]
\begin{syntax}[desc={right,flushright},descsep=.6em,size=\small]
\entry \sePIFJ = \vect{\seCD} [Program]
\entry \seCD   = \kwclass\ \nameC\ \kwextends\ \nameC\ \{\ \vect{\seAD}\ \} [Class]
\entry \seAD   = \seFD \bnfor \seMD  [Attribute (Field or Method)]
\entry \seFD   = \nameC\ \namef [Field]
\entry \seMD   = \nameC\ \namem(\vect{\nameC\ \namex})\ \{ \kwreturn\ \see; \} [Method]
\entry \see    = \namex \bnfor \see.\namef \bnfor \see.\namem(\vect{\see}) \bnfor \kwnew\ \nameC() \bnfor (\nameC)\see
                 \bnfor \see.\namef = \see \bnfor \kwnull 
                 \hspace*{-3.3em} [Expression]
\end{syntax}
\hrule
\vspace*{0.4em}
\begin{syntax}[desc={right,flushright},descsep=4em,size=\small]
\entry \sePIFDJ    =  \sePIFJ\quad \vect{\seDMD}\quad \seFM\quad\seCK [Product Line] 
\entry \seDMD      = \kwdelta\ \named\ \{\ \vect{\seCO}\ \} [Delta Module]
\entry \seCO       = \kwadds\ \seCD \bnfor \kwremoves\ \nameC \bnfor \kwmodifies\ \nameC\ {[ \kwextending\ \nameC']}\ \{\ \vect{\seAO}\ \}\hspace*{-3.1em} [Class Operation]
\entry \seAO       = \kwadds\ \seAD \bnfor \kwmodifies\ \seMD \bnfor \kwremoves\ \namea [Attribute Operation]
\end{syntax}\vspace*{-1.5em}
\caption{Synax of \IFJ\ (top) and \IFDJ\ (bottom)}\label{fig:IFDJ:syntax}\label{fig:IFJ:syntax}
\end{figure}
A program $\sePIFJ$ is a sequence of class declarations $\vect{\seCD}$.
A class declaration $\kwclass\ \nameC\ \kwextends\ \nameC'\ \{\ \vect{\seAD}\ \}$
comprises the name $\nameC$ of the class, the name $\nameC'$ of the superclass (which must always be specified, even if it is the built-in class $\name{Object}$),
 and a list of field and method declarations $ \vect{\seAD}$.
All fields and methods are public, there is no field shadowing, there is no method overloading,
 and each class is assumed to have an implicit constructor that initializes all fields to $\kwnull$.
The subtyping relation $\issubtof$ on classes, 
  which is the reflexive and transitive closure of the immediate subclass relation (given by the $\kwextends$ clauses in class declarations), is supposed to be acyclic.


The abstract syntax of the language \IFDJ\ is given in Figure~\ref{fig:IFDJ:syntax} (bottom).
An \IFDJ\ \SPL\ $\sePIFDJ$ comprises: a possibly empty or incomplete  \IFJ\ program \sePIFJ;
 a set of delta modules $\vect{\seDMD}$ that, together with the base program  $\sePIFJ$, represents the artifact base;
 a feature model $\seFM$ specifying the features and the products of the \SPL;
 and a configuration knowledge $\seCK$ (i.e., the ordering between delta modules and their activation conditions).

To simplify the presentation, we do not give a syntactic description of $\seFM$ nor of $\seCK$ and we rely on getter functions as follows: 
  $\sePIFDJ.\gfeatures$ is the set of features; 
  $\sePIFDJ.\gformula$ specifies the products (i.e., a subset of the power set $2^{\sePIFDJ.\gfeatures}$);
  $\sePIFDJ.\gdeltacondition$ maps each delta module name $\named$ to its activation condition;
and $\sePIFDJ.\gorder$ (or $\isbefore$, for short) is the application ordering between the delta modules.

A delta module declaration $\seDMD$ comprises the name $\named$ of the delta module
 and class operations $\vect{\seCO}$ representing the transformations performed when the delta module is applied to an \IFJ\ program.
A class operation can add, remove, or modify a class.
A class can be modified by (possibly) changing its super class and performing  attribute operations $\vect{\seAO}$  on its body.
An {\em attribute name} $\namea$  is either a field name $\namef$ or a method name $\namem$.
An attribute operation can add or remove fields and methods, and modify the implementation of a method by replacing its body. 
The new body may call the special method $\kworiginal$,
 which is implicitly bound to the previous implementation of the method and may not be used as the name of a method. 


The \emph{projection} of a product line on a subset of its products is obtained by
 restricting  $\sePIFDJ.\gformula$ to describe only the products in the subset and by dropping delta modules that are never activated.

\begin{example}\label{exa:prj-Eval}
For instance, the \AB\ of the projection of the \EPL\ on the products without feature \textsf{Neg}
 is obtained by dropping the delta modules \name{DNeg}, \name{DNegPrint} and \name{DOptionalPrint};
 and the \AB\ of the projection of the \EPL\ on the products without feature \textsf{Eval2}
 is obtained by dropping the delta modules \name{DLitEval2}, \name{DAddEval2} and \name{DNegEval2}. 
\end{example}

\section{Auxiliary Notations}\label{sec:auxdef}

In this section we introduce some auxiliary notations that will be used in Section~\ref{sec:algo}.
Our first notation relates the $\kwmodifies$ operators on methods to the concept of monotonicity.
Indeed, in general $\kwmodifies$ on methods is not monotonic:
  the body of the method is replaced by some code that can be entirely different.
However, we can distinguish two cases in which $\kwmodifies$ can be considered monotonic:
 when it calls $\kworiginal$, the generated variant contains the original body of the method, and so $\kwmodifies$ can be considered {\em increasing monotonic};
 when the body of the method is {\em voided} (i.e., it is replaced by $\kwreturn\ \kwnull$) $\kwmodifies$ can be considered {\em decreasing monotonic}.
\begin{notation}[$\kwwraps$ and $\kwvoids$]\label{def:wrapsANDreplaces}
Let $\kwwraps$ denote a $\kwmodifies$ operation on method that calls $\kworiginal$, and $\kwvoids$ denote a $\kwmodifies$ operation that removes the content of a method:
 $\kwvoids\ \namem$ corresponds to $\kwmodifies\ \nameC\ \namem(\cdots)\ \{\kwreturn\; \kwnull\}$.
\end{notation}

The goal of the two following notations is to unify delta operations on classes and on attributes in a single model,
  in order to manage uniformly these two kind of operations in our refactoring algorithms.
Using these notations simplifies the description of our refactoring algorithms.
\begin{notation}
A {\em reference}, written $\nameel$, is either a class name $\nameC$ or a qualified attribute name $\nameC.\namea$
 and we write $\nameel\leq\nameel'$ if $\nameel=\nameel'$ or if $\nameel$ is a prefix of $\nameel'$.
By abuse of notation, we also consider the $\kwextends$ clause as an attribute of its class, and consider $\nameC.\kwextends$ as a valid reference.
\end{notation}

\begin{notation}\label{def:ADO}
We abstract a delta module by a set of {\em Abstract Delta Operations} (\ADO) which are triplets $(\op, \nameel, \data)$ where:
 i) $\op$ is a delta operation keyword ($\kwadds$, $\kwremoves$ or $\kwmodifies$),
 ii) $\nameel$ is the reference on which $\op$ is applied,
 iii) $\data$ is the data associated with this operations, and
 iv) if $\op=\kwmodifies$ then  $\nameel$ is not a class name. 
Given an \ADO\ $\ado$, we denote its operator as $\ado.\op$, its reference as $\ado.\nameel$ and its data as $\ado.\data$.
\end{notation}
These two notations are ilustrated by the following examples.  
In particular, the first example shows that a $\kwmodifies$ operation on a class $\nameC$
 that contains only $\kwadds$ operations on attributes is represented by the set of \ADO s containing only the $\kwadds$ operations:
 the $\kwmodifies\ \nameC$ operation is only a syntactic construction to introduce these $\kwadds$ operations and is not included in our representation.

\begin{example}
The delta module \name{DLitEval2} in Figure~\ref{fig:deltaEval} that modifies classes \name{Exp} and \name{Lit} by adding a method \name{eval} to each of them,
 is modeled with only two \ADO s:
 \begin{center}
  (\lstinline[basicstyle=\rm\sffamily\footnotesize]|adds, $\;\;$ Exp.eval, $\;\;$ Lit eval() { return null; }|)
   $\quad$ and $\quad$ (\lstinline[basicstyle=\rm\sffamily\footnotesize]|adds, $\;\;$ Lit.eval, $\;\;$ Lit eval() { return this; }|)
 \end{center}
These \ADO s model the addition of the {\tt eval} methods,
 the modification of classes \name{Exp} and \name{Lit} being implicit as \name{Exp} (resp. \name{Lit}) is a prefix of $\name{Exp.eval}$   (resp.\ $\name{Lit.eval}$).
\end{example}

\begin{example}
The delta module \name{DOptionalPrint} in Figure~\ref{fig:deltaNeg} that modifies the class \name{Add} by modifying the method \name{toString}, is modeled with only one \ADO:
\begin{center}
(\lstinline[basicstyle=\rm\sffamily\footnotesize]|modifies, $\;\;$ Add.toString, $\;\;$ String toString() { return ``('' + original() + ``)''; }|)
\end{center}
\end{example}


\begin{example}\label{exa:proj-NoNeg}
Note  that, according to Definition~\ref{def:ADO}, the projection of the \EPL\ on the products without feature \textsf{Neg} does not contain $\kwmodifies$ operations.
\end{example}

Our last notations are used to iterate over delta modules:
 first, we present the notations to get a set of delta module names,
 then we present the notations to order such a set so to iterate over it in a \lstinline[language=Algorithm,basicstyle=\normalsize]|for| loop.

\begin{notation}
The set of delta module names declared in $\sePIFDJ$ is denoted as $\gdmodule(\sePIFDJ)$.
When $\sePIFDJ$ is clear from the context, we write $\fbefore(\named)$ the set of delta module names that are before $\named$ for $\sePIFDJ.\gorder$.
\end{notation}

\begin{notation}
Given a set of delta names $S=\{\named_i\sht i\in I\}$, we denote $\orderup{S}$ (resp. $\orderdown{S}$) a sequence $(\named_{i_1},\dots,\named_{i_n})$
 of all the names in $S$ that respects the partial order (resp. the partial order opposite from the one) specified by $\sePIFDJ.\gorder$.
\end{notation}

\section{Monotonicity and Refactoring Algorithms}\label{sec:algo}
In the introduction, we pointed out that the flexibility provided by delta operations, being very useful for easily constructing \SPL s,
 can lead to unnecessary complexity with many adding and removing operations cancelling each other.
Monotonicity is a natural approach to lower such complexity as it forbids opposite adding and removing operations:
 informally, {\em increasing monotonicity} is constructing a variant only by adding new content to the base program
  and is in principle similar to {\em Feature-Oriented Programming} (\FOP)~\cite{Batory:2003};\footnote{As pointed out in~\cite{FOSD}, \DOP\ is a generalization of \FOP:
 the \AB\ of a \FOP\ product line consists of a set of \emph{feature modules} which are delta modules that correspond one-to-one to features and do not contain remove operations.}
 on the other hand, {\em decreasing monotonicity} is constructing it only by removing content from the base program
  and share similarities with annotative  approaches (see, e.g.,~\cite{czarnecki2005,Kastner-EtAl:ACM-TSE-2012}).


Section~\ref{sec:algo-inc} focuses on increasing monotonicity:
 it formalizes and motivates different levels of purity for it,
 then presents a refactoring algorithm transforming an \SPL\ into an increasing monotonic equivalent
 and illustrates it on the \EPL\ example.
Section~\ref{sec:algo-dec} formalizes decreasing monotonicity, presents a refactoring algorithm and its application to the \EPL.
Section~\ref{sec:algo-prop} gives correctness and complexity of the refactoring algorithms.

\subsection{Increasing Monotonicity}\label{sec:algo-inc}


Before presenting the first refactoring algorithm, we gradually introduce three notions of increasing monotonicity,
 from the most intuitive one, called {\em strictly-increasing}, to the most flexible one, called {\em pseudo-increasing}.
Depending on the properties of the input \SPL, the algorithm can produce \SPL s corresponding to any of the three notions.
A first intuitive notion of increasing monotonicity is only to allow $\kwadds$ operations:
\begin{definition}[Strictly-increasing monotonic]\label{def:strict-im}
An \SPL\ is {\em strictly-increasing monotonic} iff it only contains $\kwadds$ operations.
\end{definition}
\noindent
Note that this notion is quite restrictive, as it does not allow the extension of method implementation, or the modification of the $\kwextends$ clause of a class,
  two operations possible in \FOP.
%
%
The following more liberal notion allows to increase the body of existing methods by using the $\kwmodifies$ operator by always calling $\kworiginal$.
Still, it  does not include the modification of the $\kwextends$ clause of a class present in \FOP.
\begin{definition}[Increasing Monotonic]\label{def:im}
An \SPL\ is {\em increasing monotonic} iff it only contains $\kwadds$ and $\kwwraps$ operations.
\end{definition}

\noindent
The last notion, which is a generalization of \FOP, is to allow $\kwmodifies$ also to modify the $\kwextends$ clause of a class and to replace the implementation of a method,
 leaving only $\kwremoves$ as a forbidden operation:
\begin{definition}[Pseudo-increasing monotonic]\label{def:pseudo-im}
An \SPL\ is {\em pseudo-increasing monotonic} iff it does not contain $\kwremoves$ operations.
\end{definition}
\noindent
We have qualified the above notion as \emph{pseudo-}, since it allows delta modules to replace the $\kwextends$ clause of a class and to remove or entirely replace content from the body of method definitions.
Thus, it does not reflect the informal definition of increasing monotonicity given at the beginning of Section~\ref{sec:algo}.

\subsubsection{Increasing Monotonicity Refactoring Algorithm}\label{sec:algo-inc-algo}

\begin{figure}[t]
\begin{minipage}[t]{.49\textwidth}
\begin{lstlisting}[language=Algorithm,basicstyle=\footnotesize,numbers=left]
Delta Module Name: $\named_1$, $\named_2$;
Operation: $\ado_1$, $\ado_2$;
Set $\text{\bf of}$ Delta Module Name: $S$;

refactor($\sePIFDJ$) = 
  for $\named_1\in\orderup{\gdmodule(\sePIFDJ)}$ do
    for $\ado_1\in\sePIFDJ(\named_1)$ do
      if($\ado_1.\op=\kwremoves$)
        $\sePIFDJ(\named_1)$ := $\sePIFDJ(\named_1)\setminus\ado_1$
        manageOperation()
      fi
    done
  done;

manageOperation() =
  $S$ := $\emptyset$
  for $\named_2\in\orderdown{\fbefore(\named_1)}$ do
    for $\ado_2\in\sePIFDJ(\named_2)$ do
      if($\ado_1.\nameel\leq \ado_2.\nameel$) mergeOperations() fi
    done
  done
  mergeToBase();
\end{lstlisting}
\end{minipage}\begin{minipage}[t]{.50\textwidth}
\begin{lstlisting}[language=Algorithm,basicstyle=\footnotesize,numbers=left,firstnumber=last]
mergeOperations() =
  $S$ := $S\cup\{\named_2\}$
  $\sePIFDJ(\named_2)$ := $\sePIFDJ(\named_2)\setminus\ado_2$
  if($\sePIFDJ(\named_2)=\{\ \}$) $\sePIFDJ$ := $\sePIFDJ\setminus\named_2$ fi
  $\sePIFDJ$ := $\sePIFDJ + \named\fresh$ with {
    $\sePIFDJ(\named)$ := $\{\ \ado_2\ \}$
    $\sePIFDJ.\gdeltacondition(\named)$ := $\named_2\land\neg\named_1$
    $\sePIFDJ.\gorder(\named)$ := $\sePIFDJ\gorder(\named_2)$
  }

mergeToBase() =
  $\data$ := $\sePIFDJ.\sePIFJ(\ado_1.\nameel)$
  if($\data$ != $\bot$)
    $\sePIFDJ.\sePIFJ$ := $\fapply(\ado_1, \sePIFDJ.\sePIFJ)$
    $\sePIFDJ$ := $\sePIFDJ + \named\fresh$ with {
      $\sePIFDJ(\named)$ := $\{\ (\ \kwadds,\; \ado_1.\nameel,\; \data\ )\ \}$
      $\sePIFDJ.\gdeltacondition(\named)$ := $\neg\named_1$
      $\sePIFDJ.\gorder(\named)$ := before($S$)
    }
  fi;
\end{lstlisting}
\end{minipage}\vspace*{-1em}
\caption{Refactoring Algorithm for Increasing Monotonic \SPL}\label{fig:refactoring:increasing}\vspace*{-2em}
\end{figure}
%
The refactoring algorithm, presented in Figure~\ref{fig:refactoring:increasing}, transforms its input \DOP\ product line $\sePIFDJ$
 by eliminating all $\kwremoves$ operations and without eliminating or introducing new $\kwmodifies$ operations.
Therefore, the refactored \SPL\ is \vspace*{-.3em}
\begin{itemize}
\item strictly-increasing, if $\sePIFDJ$ does not contain $\kwmodifies$ operations;\vspace*{-.3em}
\item increasing, if all the $\kwmodifies$ operations in $\sePIFDJ$ are $\kwwraps$ operations; and\vspace*{-.3em}
\item pseudo-increasing, otherwise.
\end{itemize}
Note that the algorithm may turn an existing delta module into an empty delta module
  which can then can be removed by a straightforward algorithm (see~\cite{Schulze:2013:RDS:2451436.2451446}).

To illustrate how the refactoring algorithm works, consider a delta module $\named$ containing a removal operation on an element $\nameel$ (either a class or an attribute).
This operation would be applied only when $\named$ is activated,
 and would remove all declarations (and modification) of $\nameel$ that are done {\em before} the application of $\named$.
Hence, to cancel this removal operation, we can simply transform the \SPL\ so that $\nameel$ is never declared before $\named$ and when it is activated.

%
The algorithm is structured in four functions with four global variables.
The main function of our algorithm is {\tt refactor} which takes the \SPL\ to refactor as parameter.
This function looks in order at all the delta modules and when finding a $\kwremoves$ operation $\ado_1$ inside a delta module $\named_1$,
 it cancels it from  $\named_1$ and calls the {\tt manageOperation} function.
The goal of the {\tt manageOperation} function is to transform the \SPL\ for the $\ado_1$ operation as described before.
It is structured in two parts.
First, it looks in order at all the delta operations applied before $\named_1$,
 and upon finding an operation $\ado_2$ in a delta module $\named_2$ that manipulates $\ado_1.\nameel$,
 it calls {\tt mergeOperation} which extracts that operation from $\named_2$ and changes the application condition of $\ado_2$ (using a freshly created delta module $\named$)
 so it is executed only when $\ado_1$ would not be executed.
Second, it calls {\tt mergeToBase} which looks if the element removed by $\ado_1$ is declared in the base program, 
 and if so, extracts it from the base program into a fresh opposite delta module $\named$ that is activated only when $\ado_1$ would not be executed.
The addition of this new delta module is done in lines 37--41 where we state that $\sePIFDJ$ is changed by adding a fresh delta module $\named$ with the following characteristics:
  its set of \ADO\ $\sePIFDJ(\named)$ is the singleton $(\kwadds, \ado_1.\nameel, \data)$ that adds $\ado_1.\nameel$ again to the base program;
  its activation condition $\sePIFDJ.\gdeltacondition(\named)$ is the opposite of $\named_1$;
  and its ordering $\sePIFDJ.\gorder(\named)$ states that it must be applied before all the delta modules in $S$.

There are three subtleties in this algorithm.
First, to deal with the fact that removing a class also removes all its attributes, the condition in line 19 is ``$\ado_1.\nameel\leq \ado_2.\nameel$''
 meaning that: if $\ado_1$ removes a class $\nameC$, then previous additions and modifications of $\nameC$ and its attributes will be changed with {\tt mergeOperation}.
Second, in line 26, empty delta modules are eliminated to avoid creating too much of them.
Third, we compute in $S$ the set of all delta modules manipulating $\ado_1.\nameel$ before $\named_1$
 to set the order relation of the delta module created in the {\tt mergeToBase} function.

\subsubsection{Example: Refactoring the \EPL\ into Increasing Monotonicity}

We applied our implementation of this algorithm on the \EPL\ given in Section~\ref{sec:example}.
It contains only one $\kwremoves$ operation, in the {\tt DremAdd} delta module, removing the \textsf{Add} class.
Thus, by construction of our algorithm,
 only the delta modules {\tt DAddEval1}, {\tt DAddEval2}, {\tt DOptionalPrint} and the base program, that modify and declare the \textsf{Add} class (respectively),
 are changed by the refactoring process.

Let us illustrate the modification done on the delta modules by considering {\tt DAddEval1}:
 the function {\tt mergeOperations} extract the only operation inside this delta module (line 25),
 removes {\tt DAddEval1} as it is now empty (line 26), and then basically recreates it (line 27),
 with the activation condition extended with $\neg\text{\tt DremAdd}$, corresponding to $\text{\tt Add}$. 
Hence, the delta modules are simply renamed by the algorithm.
However, the base program is changed by the function {\tt mergeToBase} which removes the class \textsf{Add} from it,
 and creates a new delta module reintroducing that class with the activation condition $\neg\text{\tt DremAdd}$ which corresponds to $\text{\tt Add}$.

The modified delta modules are shown in Figure~\ref{fig:epl:increasing}.
The modified base program, which is not shown, is obtained from the original base program (see Figure~\ref{fig:DLitAddNegPrint}) by dropping the declaration of class \name{Add}.
Note that, since all the $\kwmodifies$ operations of the original SPL were $\kwwraps$ operations,
 the refactored SPL is increasing monotonic. 
On the other hand, since the projection of the original \EPL\ on the products without feature \textsf{Neg} does not contain $\kwmodifies$ operations (see Example~\ref{exa:proj-NoNeg} in Section~\ref{sec:auxdef}),
 its increasing monotonic refactoring would produce a strict-increasing product line.
\begin{figure}
\begin{minipage}[t]{.46\textwidth}
\begin{lstlisting}
delta DNotDremAdd {
  adds class Add extends Exp {
    Exp expr1;
    Exp expr2;
    Add setAdd(Exp a, Exp b) {
      expr1 = a; expr2 = b; return this; }
    String toString() { return expr1.toString()
      + " + " + expr2.toString(); }
} }
delta DOptionalPrint_DremAdd {
   modifies Add {
      modifies String toString() {
       return "(" +  original() + ")"; } 
} }
\end{lstlisting}
\end{minipage}\begin{minipage}[t]{.53\textwidth}
\begin{lstlisting}
delta DAddEval1_DremAdd {
   modifies Add {
      adds int eval() {
        return expr1.eval() + expr2.eval();
      }
} }
delta DAddEval2_DremAdd {
   modifies Add {
      adds Lit eval() {
        Lit res = exp1.eval(); 
        return res.setLit(res.value + exp2.eval()); }
} }
\end{lstlisting}
\end{minipage}\vspace*{-1em}
\caption{Delta Modules of the \EPL\ Changed by the Increasing Refactoring Algorithm}\label{fig:epl:increasing}
\end{figure}

\vspace*{-.3em}

\subsection{Decreasing Monotonicity}\label{sec:algo-dec}

Like for increasing monotonicity, we 
introduce several levels of purity for decreasing monotonicity before presenting the refactoring algorithm.
Straightforward adaptations of Definition~\ref{def:strict-im},~\ref{def:im} and~\ref{def:pseudo-im} lead to the following definitions of 
strictly-decreasing, decreasing and pseudo-decreasing monotonicity.


\begin{definition}[Strictly-decreasing monotonic]\label{def:strict-dm}
An \SPL\ is {\em strictly-decreasing monotonic} iff it only contains $\kwremoves$ operations.
\end{definition}

\begin{definition}[Decreasing Monotonic]\label{def:dm}
An \SPL\ is {\em decreasing monotonic} iff it only contains  $\kwremoves$ operations and $\kwvoids$ operations.
\end{definition}

\begin{definition}[Pseudo-decreasing monotonic]\label{def:pseudo-dm}
An \SPL\ is {\em pseudo-decreasing monotonic} iff it only contains   $\kwremoves$ and $\kwmodifies$ operations.
\end{definition}

Unfortunately, the three above notions suffer of a major drawback: not all  product lines can be expressed by following their prescriptions.
For instance,  in order to conform to any of Definition~\ref{def:strict-dm},~\ref{def:dm} and~\ref{def:pseudo-dm},
 the base program of the EPL (cf.\ Section~\ref{sec:example}) must contain the class declaration
\begin{lstlisting}
class Exp extends Object {
     String toString() { return null; }
     Lit eval() { return null; }
     int eval() { return 0; }
}
\end{lstlisting}\vspace*{-.5em}
that contains two method declarations with same signature \textsf{eval()} and therefore is not valid in Java.
In order to overcome this drawback, we introduce the following notation to express the notion of ``readding''  (i.e., to remove and to immediately add)  an attribute.

\begin{notation}[$\kwreadds$]\label{def:readds}
Let $(\kwreadds,\nameel,\data)$ denotes the sequence of removing the attribute $\nameel$, and then performing $(\kwadds,\nameel,\data)$.
\end{notation}
\noindent We can now give the definitions of 
read-strictly-decreasing, readd-decreasing and read-pseudo-decreasing monotonicity that does not suffer of the above drawback.

\begin{definition}[Readd-strictly-decreasing monotonic]\label{def:readd-strict-dm}
An \SPL\ is {\em readd-strictly-decreasing monotonic} iff it only contains $\kwreadds$ and $\kwremoves$  operations.
\end{definition}

\begin{definition}[Readd-decreasing monotonic]\label{def:readd-dm}
An \SPL\ is {\em readd-decreasing monotonic} iff it only contains  $\kwreadds$ operations, $\kwremoves$ operations and  $\kwvoids$ operations.
\end{definition}

\begin{definition}[Readd-pseudo-decreasing monotonic]\label{def:readd-pseudo-dm}
An \SPL\ is {\em readd-pseudo-decreasing monotonic} iff it only contains $\kwreadds$, $\kwremoves$ and $\kwmodifies$ operations.
\end{definition}

\subsubsection{Decreasing Monotonicity Refactoring Algorithm}\label{sec:algo-dec-algo}

Our algorithm, presented in Figure~\ref{fig:refactoring:decreasing}, refactors a \DOP\ product line $\sePIFDJ$
 by eliminating  all $\kwadds$ operations and  without eliminating or introducing new $\kwmodifies$ operations.
Therefore, the refactored SPL  is \vspace*{-.3em}
\begin{itemize}
\item readd-strictly-decreasing if $\sePIFDJ$ does not contain $\kwmodifies$ operations;\vspace*{-.3em}
\item readd-decreasing if  all the $\kwmodifies$ operations in $\sePIFDJ$ are $\kwvoids$ operations; and\vspace*{-.3em}
\item readd-pseudo-decreasing, otherwise.
\end{itemize}
The decreasing monotonic refactoring algorithm may introduce empty new delta modules.
As pointed out in the discussion at the beginning of Section~\ref{sec:algo-inc-algo}, empty delta modules can be removed from the refactored product line by  a straightforward algorithm.
Moreover, if each class/attribute is introduced (i.e., either declared in the base program or added by a delta module) only once,
 then decreasing monotonic refactoring does not introduce $\kwreadds$ operations.

The structure of this refactoring algorithm is similar to the one to get increasing monotonicity:
  the main function {\tt refactor} takes as parameter the \SPL\ to refactor, and iterates over all the delta modules to find an $\kwadds$ operator to remove.
\begin{figure}
\begin{minipage}[t]{.53\textwidth}
\begin{lstlisting}[language=Algorithm,basicstyle=\footnotesize,numbers=left]
Delta Module Name $\named_1$, $\named_2$;
Operation $\ado_1$, $\ado_2$;

refactor($\sePIFDJ$) = 
  for module $\named_1\in\orderup{\gdmodule(\sePIFDJ)}$ do
    for $\ado_1\in\sePIFDJ(\named_1)$ do
      if($\ado_1.\op=\kwadds$)
        $\sePIFDJ(\named_1)$ := $\sePIFDJ(\named_1)\setminus\ado_1$
        manageOperation()
      fi
    done
  done;

manageOperation() =
  for module $\named_2\in\orderdown{\fbefore(\named_1)}$ do
    for $\ado_2\in\sePIFDJ(\named_2)$ do
      if(($\ado_2.\nameel\in\dom(\ado_1)$) & ($\ado_2.\op = \kwremoves$))
        mergeOperations()
      fi
    done
  done
  mergeToBase();
\end{lstlisting}
\end{minipage}\begin{minipage}[t]{.46\textwidth}
\begin{lstlisting}[language=Algorithm,basicstyle=\footnotesize,numbers=left,firstnumber=last]
mergeOperations() =
  $\sePIFDJ(\named_2)$ := $\sePIFDJ(\named_2)\setminus\ado_2$
  if($\sePIFDJ(\named_2)=\emptyset$) $\sePIFDJ$ := $\sePIFDJ\setminus\named_2$ fi
  $\sePIFDJ$ := $\sePIFDJ + \named\fresh$ with {
    $\sePIFDJ(\named)$ := $\{\ \ado_2\ \}$
    $\sePIFDJ.\gdeltacondition(\named)$ := $\named_2\land\neg\named_1$
    $\sePIFDJ.\gorder(\named)$ := $\sePIFDJ.\gorder(\named_2)$
  };

mergeToBase() =
  Set $\text{\bf of reference}$: $S$ := $\dom(\sePIFDJ.\sePIFJ)$
  $\sePIFDJ.\sePIFJ$ := $\sePIFDJ.\sePIFJ\cup\{\nameel\ \data \sht (\kwadds,\nameel,\data)\in\ado_1\land\nameel\not\in S\}$;
  $\sePIFDJ$ := $\sePIFDJ + \named\fresh$ with {
    $\sePIFDJ(\named)$ := $\{\ (\kwreadds,\nameC.\namea,\ado_1(\nameel)) \sht \nameC.\namea\in\dom(\ado_1)\cap S \}$
    $\sePIFDJ.\gdeltacondition(\named)$ := $\named_1$
    $\sePIFDJ.\gorder(\named)$ := $\sePIFDJ.\gorder(\named_1)$
  } $+$ $\named'\fresh$ with {
    $\sePIFDJ(\named')$ := $\{\ (\kwremoves, \nameel, \emptyset) \sht \nameel\in\dom(\ado_1)\setminus S \}$
    $\sePIFDJ.\gdeltacondition(\named')$ := $\neg\named_1$
    $\sePIFDJ.\gorder(\named')$ := $\sePIFDJ.\gorder(\named_1)$
  };
\end{lstlisting}
\end{minipage}\vspace*{-1em}
\caption{Refactoring Algorithm for Decreasing Monotonic \SPL}\label{fig:refactoring:decreasing}\vspace*{-.5em}
\end{figure}
Upon finding an operation $\ado_1$ with an $\kwadds$ operator in a delta module $\named_1$, the function {\tt manageOperation} is called.
This function, like for the increasing refactoring algorithm, is structured in two parts.
First, it looks in order at all the delta operations applied before $\named_1$,
 and upon finding an operation $\ado_2$ in a delta module $\named_2$ that manipulates $\ado_1.\nameel$ with a $\kwremoves$ operator,
 it calls {\tt mergeOperation} which extracts that operation from $\named_2$ and update the application condition of $\ado_2$ as done in the other algorithm.
Second, it calls {\tt mergeToBase} which integrates the operations $\ado_1$ in the base program as follows:
 first, it completes the base program with all the declarations introduced in $\ado_1$ that was missing from it;
 second, it creates a new delta module $\named$ that readds (see Definition~\ref{def:readds}) all the declarations originally done in the base program by the ones done in $\ado_1$;
 finally, it creates a new delta module $\named'$ opposite to $\ado_1$ that removes all the declarations done in $\ado_1$ if these operations would not be executed.
For the creation of these delta modules in lines 35--43, we use the following notations:
  $\dom(\ado)$ is the set of references that are declared in that operations,
  and $\ado(\nameel)$ is the data $\data$ associated to $\nameel$ in $\ado$.
For instance, with $\ado$ being the $\kwadds$ operation in the {\tt DNeg} delta module, we have
\begin{center}{\footnotesize
 $\dom(\ado)=\{\name{Neg},\name{Neg.expr},\name{Neg.setNeg}\}\qquad \text{\normalsize and, e.g.,}\qquad \ado(\name{Neg.expr})=(\name{Exp}\ \name{expr})$
}\end{center}
There are two subtleties in this algorithm.
First, it can occur that before an $\kwadds$ operation adding a class $\nameC$, removal operations can be applied on the {\em attributes} of $\nameC$,
 and so, the condition in line 17 ``$\ado_2.\nameel\in\dom(\ado_1)$'' captures all possible attributes of $\ado_1.\nameel$.
Second, in line 36, we only readd attributes, not classes, to ensure that the base program contains every elements declared in the \SPL.
Note also that in this example, there is no need of a set $S$ to define the order of the delta modules created in {\tt mergeToBase}:
  the order simply is the one of the original $\named_1$ delta module.

\subsubsection{Example: Refactoring the \EPL\ into Decreasing Monotonicity}

We applied this refactoring algorithm to the \EPL\ example.
All its delta modules but {\tt DremAdd} and {\tt DOptionalPrint} add new content to the base program, and all of them are modified by the refactoring as follows:
  they are emptied out by the {\tt refactor} function which removes the $\kwadds$ operations,
  that are then reintroduced to the \SPL\ by the {\tt mergeToBase} in the base program with few new delta modules.
The structure of the resulting \SPL\ is presented in Figure~\ref{fig:epl:decreasing}---it contains 8 empty delta modules
 (lines 27, 29, 31, 33, 38, 41, 44 and 47), which can be straightforwardly removed.
The left part of Figure~\ref{fig:epl:decreasing} contains the new base program which now contains all the elements declared in the \SPL:
  the class \textsf{Neg} as well as the attributes {\tt toString} and {\tt eval} are declared in the base program.
Note that as the delta modules implementing the \textsf{Eval1} feature are before the ones implementing the \textsf{Eval2} feature,
  the new base program contains the \textsf{Eval1} version of the {\tt eval} methods.
The right part of Figure~\ref{fig:epl:decreasing} presents the newly added delta modules.
The names of these delta modules are constructed in two parts: first the operation they perform, and then the delta module that created them.
For instance, {\tt DremNeg\_DNeg} is the removing delta module created in the {\tt mergeToBase} function from the {\tt DNeg} delta module:
 it removes the \textsf{Neg} class when the feature \textsf{Neg} is not selected.
The second delta module {\tt DremNegToString\_DNegPrint} is the delta module removing the method $\name{Neg}.\name{toString}$
 when neither \textsf{Neg} nor \textsf{Print} are selected.
The second set of delta modules (from line 27 to 34) corresponds to the integrations of the \textsf{Eval1} feature in the base program.
For instance, {\tt DreaddNegEval\_DNegEval1} is the $\named$ delta module created by the {\tt mergeToBase} function (line 35 in Figure~\ref{fig:refactoring:decreasing}),
 and does not contain any operations as the base program did not originally contain the {\tt eval} method;
 {\tt DremNegEval\_DNegEval1} is the $\named'$ delta module created by the {\tt mergeToBase} function (line 39 in Figure~\ref{fig:refactoring:decreasing}),
 and removes the $\name{Neg}.\name{eval}$ method when the feature \textsf{Eval1} or \textsf{Neg} is not selected.
\begin{figure}
\begin{minipage}[t]{.39\textwidth}
\begin{lstlisting}[numbers=left]
class Exp extends Object {
  String toString() { return ""; }
  int eval() { ... }
}
class Lit extends Exp {
  int value;
  Lit setLit(int n) { ... }
  String toString() { ... }
  int eval() { ... }
}
class Add extends Exp {
  Exp expr1;
  Exp expr2;
  Add setAdd(Exp a, Exp b) { ... }
  String toString() { ... }
  int eval() { ... }
}
class Neg extends Exp {
  Exp expr;
  Neg setNeg(Exp a) { ... }
  String toString() { ... }
  int eval() { ... }
}
\end{lstlisting}
\end{minipage}\begin{minipage}[t]{.59\textwidth}
\begin{lstlisting}[numbers=left,firstnumber=last]
DremNeg_DNeg { removes Neg }
DremNegToString_DNegPrint { modifies class Neg { removes toString } }

DreaddNegEval_DNegEval1 {  }
DremNegEval_DNegEval1 { modifies class Neg { removes eval } }
DreaddExpEval_DLitEval1 {  }
DremExpEval_DLitEval1 { modifies class Exp { removes eval } }
DreaddLitEval_DLitEval1 {  }
DremLitEval_DLitEval1 { modifies class Lit { removes eval } }
DreaddAddEval_DAddEval1 {  }
DremAddEval_DAddEval1 { modifies class Add { removes eval } }

DreaddNegEval_DNegEval2 {
  modifies class Neg { readds Lit eval() { ... } } }
DremNegEval_DNegEval2 {  }
DreaddExpEval_DLitEval2 {
  modifies class Exp { readds Lit eval() { ... } } }
DremExpEval_DLitEval2 {  }
DreaddLitEval_DLitEval2 {
  modifies class Lit { readds Lit eval() { ... } } }
DremLitEval_DLitEval2 {  }
DreaddAddEval_DAddEval2 {
  modifies class Add { readds Lit eval() { ... } } }
DremAddEval_DAddEval2 {  }
\end{lstlisting}
\end{minipage}\vspace*{-1em}
\caption{\EPL\ Modified by the Decreasing Refactor Algorithm}\label{fig:epl:decreasing}
\end{figure}
The last set of delta modules (from line 36 to 47) corresponds to the integrations of the \textsf{Eval2} feature in the base program.
As when including this feature in the base program, the delta modules for \textsf{Eval1} already have been integrated,
  the {\em readding} delta modules contains the implementation of the \textsf{Eval2} version of the {\tt eval} method;
  and on the opposite, the {\em removing} delta modules are empty.

Note that, since the original SPL contains method $\kwmodifies$ operations that are not $\kwvoids$,
 the refactored SPL is readds-pseudo-decreasing monotonic.
On the other hand, since in the projection of the original \EPL\ on the products without feature \textsf{Eval2} each class/attribute is added only once (see Example~\ref{exa:prj-Eval} in Section~\ref{sec:model}),
 its decreasing monotonic refactoring would produce a pseudo-decreasing product line.

\vspace*{-.7em}
\subsection{Properties}\label{sec:algo-prop}

We finally present the main properties of these two refactoring algorithms.
As they both share the same characteristics, we state our theorems for both of them.
\begin{theorem}[Correctness]
Applying one of the {\tt refactor} algorithms on one \SPL\ $\sePIFDJ$ is a monotonic \SPL\ that have the same products and variants as $\sePIFDJ$.
\end{theorem}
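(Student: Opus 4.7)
The plan is to decompose the theorem into three claims: (i) the refactored \SPL\ is monotonic of the kind announced after each algorithm; (ii) its set of valid products coincides with that of the input; and (iii) for every product, the variant generated by the refactored \SPL\ coincides with the one generated by the original. Since neither algorithm ever touches $\sePIFDJ.\gfeatures$ or $\sePIFDJ.\gformula$, claim (ii) is immediate. Claim (i) follows by inspection: the increasing algorithm removes every $\kwremoves$ \ADO\ it encounters, and the fresh \ADO s inserted afterwards are either verbatim copies of pre-existing operations ({\tt mergeOperations}) or fresh $\kwadds$ operations ({\tt mergeToBase}), so no $\kwremoves$ survives; the trichotomy between strict, plain, and pseudo-increasing then falls out of the fact that $\kwmodifies$ operations are neither created nor destroyed. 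A symmetric inspection covers the decreasing algorithm, with the $\kwreadds$ of Notation~\ref{def:readds} accounting for the combined remove-and-add performed in {\tt mergeToBase}.

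The core of the proof is claim (iii). I would induct on the number of iterations of the outer {\tt for} loop of {\tt refactor} and reduce the problem to showing that a single invocation of {\tt manageOperation} preserves variants. Fix a removed \ADO\ $\ado_1$ inside a delta module $\named_1$ with activation condition $c_1 = \sePIFDJ.\gdeltacondition(\named_1)$, and fix an arbitrary product $p$. The argument then splits on whether $\named_1$ is activated in $p$. If $c_1$ does not hold in $p$, then $\ado_1$ originally contributes nothing, and the modifications performed by {\tt manageOperation} must collectively have no net effect: each extracted $\ado_2$ guarded by $c_2$ is reinstalled in a fresh module with condition $c_2 \land \neg c_1$, which under $\neg c_1$ collapses to $c_2$, and the ordering is inherited via $\sePIFDJ.\gorder(\named) := \sePIFDJ.\gorder(\named_2)$; similarly, any declaration extracted from the base is reinstated by a fresh delta with condition $\neg c_1$, placed strictly before every module in the set $S$ so that subsequent operations act on a present declaration. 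If $c_1$ does hold in $p$, then in the original \SPL\ the \ADO\ $\ado_1$ destroys every declaration of $\ado_1.\nameel$ produced strictly before $\named_1$; in the refactored \SPL\ no such declaration is ever produced, because every delta holding such a declaration has an activation condition entailing $\neg c_1$ and is therefore inactive, and the base program no longer contains the relevant entry. Operations occurring after $\named_1$, and all operations acting on references unrelated to $\ado_1.\nameel$, are untouched by the transformation and thus produce identical effects in both \SPL s.

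A symmetric case analysis, exchanging the roles of $\kwadds$ and $\kwremoves$ and using the $\kwreadds$ construction in place of bare re-insertion, handles the decreasing algorithm. The only extra subtlety is that an $\kwadds$ of a class can be preceded by $\kwremoves$ operations on its attributes, which is why the test in line~17 uses the more permissive condition $\ado_2.\nameel \in \dom(\ado_1)$, and why only attributes (not classes) are re-added into the base in line~36. The main obstacle in both proofs is verifying that the freshly introduced delta modules are placed at positions of $\sePIFDJ.\gorder$ that remain consistent with the partial-order semantics of the original \SPL\ and still preserve unambiguity (cf.\ the requirement stated in Section~\ref{sec:example} that delta modules in the same partition act on disjoint references). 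This is handled by inheriting the order slot from $\named_2$ for extracted operations, and by placing the base-reintroducing delta strictly before every module in $S$; a short sub-argument is then needed to confirm that these placements never bring two operations on overlapping references into the same partition, using the fact that in the original \SPL\ $\ado_2$ and the other operations inside $\named_2$ were already disjoint.
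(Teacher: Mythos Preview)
Your decomposition into three claims matches the paper's proof sketch exactly, and your arguments for monotonicity and product preservation are the same as the paper's (inspection of the algorithm; the feature model is untouched). The difference lies in how you establish claim~(iii). The paper proposes to fix a product $p$ and argue by induction on the number of delta modules and delta operations used to \emph{generate} the variant of $p$ in the original \SPL, whereas you induct on the iterations of the \texttt{refactor} loop and show that a single call to \texttt{manageOperation} preserves all variants, via a case split on whether $\named_1$ is activated in the product under consideration. Both strategies are sound; yours is the standard invariant-preservation pattern for an iterative transformation and yields the more explicit argument (in particular, your case analysis on $c_1$ holding or not, and your attention to the placement of the fresh delta modules in the order, go beyond what the paper's sketch spells out). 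The paper's induction, by contrast, keeps the two \SPL s fixed and walks through the generation of a single variant, which is terser but leaves more to the reader. Your additional remarks on unambiguity preservation address a point the paper's sketch does not mention at all.
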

\begin{proof}[Proof (sketch).]
Let us consider the increasing version of the {\tt refactor} algorithm (proving the result for the decreasing version is similar),
 and let us denote $\sePIFDJ'$ as $\texttt{refactor}(\sePIFDJ)$.
The fact that $\sePIFDJ'$ is monotonic is a direct consequence of the algorithm iterating over all delta operations and deleting all the $\kwremoves$ operations.
The fact that $\sePIFDJ'$ has the same products as $\sePIFDJ$ is a direct consequence of {\tt refactor} not changing the \FM\ of $\sePIFDJ$.
The fact that $\sePIFDJ'$ has the same variants as $\sePIFDJ$ can be proven by checking that each product $p$ of $\sePIFDJ$
 generates the same variant in $\sePIFDJ'$ and in $\sePIFDJ$:
 this can be done by induction on the number of delta modules and delta operations used to generate the variant of $p$ in $\sePIFDJ$.
%
\end{proof}
\noindent
Recall that the notion of increasing (resp.\ decreasing) monotonicity satisfied by the refactored SPL depends on the properties of the original SPL,
 as pointed out at the beginning of Section~\ref{sec:algo-inc-algo} (resp.\ Section~\ref{sec:algo-dec-algo}).

\begin{theorem}[Complexity]
The space complexity of the {\tt refactor} algorithms is:
 i) constant in the size of \IFJ\ code;
 ii) linear in the number of delta operations;
 and iii) linear in the number of delta operations times the number of delta modules for the generation of the activation condition of the new delta modules.
The time complexity of the {\tt refactor} algorithms is quadratic in the number of delta operations.
\end{theorem}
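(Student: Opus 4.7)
The plan is to analyze both algorithms of Figures~\ref{fig:refactoring:increasing} and~\ref{fig:refactoring:decreasing} by inspecting their loop structure and carefully bookkeeping the number of \ADO s, delta modules and activation-condition atoms that are moved, created or duplicated. Since the two algorithms share the same overall shape---an outer traversal of \texttt{refactor} that triggers \texttt{manageOperation} on each ``bad'' operation (the $\kwremoves$ operations in the increasing case, the $\kwadds$ operations in the decreasing case), followed by an inner traversal that calls \texttt{mergeOperations} and \texttt{mergeToBase}---I would give a single structural argument for the increasing version and then remark that the decreasing version is completely analogous.

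For the space bounds I would proceed claim by claim. Part~(i) follows from the observation that neither \texttt{mergeOperations} nor \texttt{mergeToBase} duplicates \IFJ{} text: \ADO s already present are \emph{moved} into freshly named delta modules, and \texttt{mergeToBase} only transfers a single class or attribute declaration between the base program and a fresh module. Hence the total size of \IFJ{} code in the refactored \SPL{} differs from that of the input by a constant. For part~(ii), each call to \texttt{mergeOperations} moves a single \ADO{} into a fresh one-element module, and each call to \texttt{mergeToBase} creates at most one (respectively two, in the decreasing case) new modules carrying at most one \ADO{} each; since the total number of such calls is bounded by the number of operations of $\sePIFDJ$, both the total count of \ADO s and the total count of delta modules in the refactored \SPL{} are linear in the number of original delta operations. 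For part~(iii), each freshly created module carries an activation condition of the shape $\named_2\land\neg\named_1$ or $\neg\named_1$, i.e. a constant number of new atoms, but encoding a single delta-module atom costs $O(\text{number of modules})$; multiplied by the linear number of newly created conditions this yields exactly the claimed space bound.

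For the time complexity I would observe that the outer double loop of \texttt{refactor} visits each original delta operation exactly once, giving a linear outer cost. Whenever the triggering condition fires, \texttt{manageOperation} traverses $\orderdown{\fbefore(\named_1)}$ together with each of its \ADO s, which is bounded by the total number of delta operations, while the bodies of \texttt{mergeOperations} and \texttt{mergeToBase} do only constant bookkeeping plus the construction of an activation condition obtained by prepending a single atom to an existing one, which is $O(1)$ under a pointer-based representation. Multiplying the two costs yields the claimed quadratic bound. The main obstacle I anticipate is showing that the freshly created delta modules do not themselves trigger further refactoring work: one must verify that in the increasing case every module produced by \texttt{mergeOperations} or \texttt{mergeToBase} contains only an $\kwadds$ or $\kwmodifies$ \ADO{} (and never a $\kwremoves$), and dually in the decreasing case, so that the outer loop either does not re-visit them or re-visits them with an empty inner body. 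This confirms that each original bad operation is processed exactly once and rules out any hidden multiplicative blow-up in both the time and the space estimates.
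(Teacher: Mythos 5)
Your proposal follows essentially the same route as the paper's own (very terse) proof sketch: space bound (i) from the fact that \IFJ{} code is moved, never duplicated; (ii) by counting the fresh delta modules and \ADO s created per processed operation; (iii) from the length of the activation conditions of the freshly created modules; and the quadratic time bound from the two nested traversals in \texttt{refactor} and \texttt{manageOperation}. Your extra care about fresh modules not re-triggering the outer loop (they contain no $\kwremoves$ in the increasing case, no $\kwadds$ in the decreasing case) is a sound addition that the paper leaves implicit.

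One detail in your part (ii) is inaccurate, and it is precisely the point the paper flags as ``more subtle'': in the decreasing algorithm, \texttt{mergeToBase} does not create modules ``carrying at most one \ADO{} each''. The module $\named$ it creates carries one $\kwreadds$ per reference in $\dom(\ado_1)\cap S$ and the module $\named'$ one $\kwremoves$ per reference in $\dom(\ado_1)\setminus S$, so a single class-level $\kwadds$ can spawn as many generated operations as the class has attributes. The linear bound survives only under the accounting convention the paper adopts, namely that an $\kwadds$ of a class counts as one operation for the class name plus one per attribute, so that the generated $\kwreadds$/$\kwremoves$ operations are charged one-to-one to these attribute-level $\kwadds$ operations. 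With that adjustment your argument matches the paper's; as stated, the ``one \ADO{} per fresh module'' claim would not justify (ii) for the decreasing refactoring.
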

\begin{proof}[Proof (sketch).]
i) is a direct consequence of the algorithm not creating or duplicating \IFJ\ code.
ii) is more subtle:
 in the increasing refactor, $\ado_1$ is replaced by one delta module containing one operation, and $\ado_2$ is kept as it is;
 however in the decreasing refactor, to match all the $\kwreadds$ and $\kwremoves$ operations generated in {\tt mergeToBase},
 we need to consider that adding a class corresponds to one $\kwadds$ operation for  the class name, and one $\kwadds$ operation for each of its fields.
iii) it is straightforward to see that the length of the activation condition of the delta module created in function {\tt mergeOperations}
 is linear in the number of delta modules in $\sePIFDJ$.
Finally, {\tt refactor} is quadratic in time in the number of delta operations as it iterates over them with two inner loops
 (one in function {\tt refactor}, one in function {\tt manageOperation}).
\end{proof}

\section{Related Work}\label{sec:related}
To the best of our knowledge, refactoring in the context of \DOP\ has been studied only in~\cite{Schulze:2013:RDS:2451436.2451446} and~\cite{DBLP:journals/corr/HaberRRS14}.
The former considers product lines of Java programs, while the latter considers delta modeling of software architectures.
We refer to~\cite{Schulze:2013:RDS:2451436.2451446} for the related work in the \FOP\ or annotative approaches.
Note that both of these approaches are monotonic by construction (\FOP\ being increasing, and annotative being decreasing),
  and so no refactoring algorithms to achieve monotonicity exist for them.
In~\cite{Schulze:2013:RDS:2451436.2451446}, a catalogue of refactoring and code smells is presented,
 and most of them focus on changing one delta module, one feature at a time.
Two of their refactorings are related to ours.
{\em Resolve Modification Action} replaces a $\kwmodifies$ operations that does not call $\kworiginal$ with an $\kwadds$ operation,
 by modifying the activation condition of previous $\kwmodifies$ and $\kwadds$ operations.
{\em Resolve Removal Action} eliminates $\kwremoves$ operations also by changing the application condition of previous $\kwmodifies$ and $\kwadds$ operations.
Other refactoring algorithms focus on how to enable extractive \SPL\ development for \FOP~\cite{Alves:2006,Liu:2006}.
These works are related to ours, as \DOP\ natively supports extractive \SPL\ development:
 refactoring such a \SPL\ into an increasing monotonic one using our algorithms is close to adapting this \SPL\ to \FOP.
%

\section{Conclusion and Future Work}\label{sec:conclusion}

In this paper, we presented two refactoring algorithms with the goal of lowering the complexity of the input \SPL,
 by removing opposite $\kwadds$ and $\kwremoves$ operations.
These algorithms work by removing one kind of operation from the input \SPL, either $\kwadds$ or $\kwremoves$,
 and so they do not duplicate code nor change the structure of the input \SPL, except for the parts related to the removed operation.

We plan four lines of future work for monotonicity in \DOP.
First, 
 we would like to investigate alternative means to reach (a possibly more flexible version of) monotonicity.
Second, complementarily to our algorithms, one could consider also refactoring code.
For instance, splitting the definition of a method into several ones would help into transforming $\kwmodifies$ operations in $\kwvoids$ operations.
Third, we would like to identify specific analysis scenarios where monotone product lines are simpler to analyze.
Fourth, we plan to develop case studies in order to evaluate the advantages and the drawbacks of the proposed refactorings.

\paragraph{Acknowledgements.} We are  grateful to the FMSPLE 2016 anonymous reviewers for many comments and suggestions for improving the presentation.

\vspace*{-.8em}

\bibliographystyle{abbrvurl}
\bibliography{biblio}

\begin{thebibliography}{10}

\bibitem{Alves:2006}
V.~Alves, R.~Gheyi, T.~Massoni, U.~Kulesza, P.~Borba, and C.~Lucena.
\newblock Refactoring product lines.
\newblock In {\em Proceedings of the 5th International Conference on Generative
  Programming and Component Engineering}, GPCE '06, pages 201--210, New York,
  NY, USA, 2006. ACM.
\newblock \href {http://dx.doi.org/10.1145/1173706.1173737}
  {\path{doi:10.1145/1173706.1173737}}.

\bibitem{Batory:2003}
D.~Batory, J.~N. Sarvela, and A.~Rauschmayer.
\newblock Scaling step-wise refinement.
\newblock In {\em Proc.\ of ICSE 2003}, pages 187--197. IEEE, 2003.
\newblock \href {http://dx.doi.org/10.1109/TSE.2004.23}
  {\path{doi:10.1109/TSE.2004.23}}.

\bibitem{BetDamSch:ACTA-2013}
L.~Bettini, F.~Damiani, and I.~Schaefer.
\newblock Compositional type checking of delta-oriented software product lines.
\newblock {\em Acta Inf.}, 50(2):77--122, 2013.
\newblock \href {http://dx.doi.org/10.1007/s00236-012-0173-z}
  {\path{doi:10.1007/s00236-012-0173-z}}.

\bibitem{Clements:2001}
P.~Clements and L.~Northrop.
\newblock {\em {Software Product Lines: Practices and Patterns}}.
\newblock Addison Wesley, 2001.

\bibitem{czarnecki2005}
K.~Czarnecki and M.~Antkiewicz.
\newblock {Mapping Features to Models: A Template Approach Based on
  Superimposed Variants}.
\newblock In {\em {Proc.\ of GPCE 2005}}, pages {422 -- 437}. 2005.
\newblock \href {http://dx.doi.org/10.1007/11561347_28}
  {\path{doi:10.1007/11561347_28}}.

\bibitem{DBLP:conf/xpu/Fowler02}
M.~Fowler.
\newblock Refactoring: Improving the design of existing code.
\newblock In {\em Extreme Programming and Agile Methods - XP/Agile Universe
  2002, Second {XP} Universe and First Agile Universe Conference Chicago, IL,
  USA, August 4-7, 2002, Proceedings}, page 256, 2002.
\newblock \href {http://dx.doi.org/10.1007/3-540-45672-4_31}
  {\path{doi:10.1007/3-540-45672-4_31}}.

\bibitem{DBLP:journals/corr/HaberRRS14}
A.~Haber, H.~Rendel, B.~Rumpe, and I.~Schaefer.
\newblock Evolving delta-oriented software product line architectures.
\newblock {\em CoRR}, abs/1409.2311, 2014.
\newblock \href {http://dx.doi.org/10.1007/978-3-642-34059-8_10}
  {\path{doi:10.1007/978-3-642-34059-8_10}}.

\bibitem{IgarashiPierceWadler:TOPLAS-2001}
A.~Igarashi, B.~Pierce, and P.~Wadler.
\newblock Featherweight {J}ava: A minimal core calculus for {J}ava and {GJ}.
\newblock {\em ACM TOPLAS}, 23(3):396--450, 2001.
\newblock \href {http://dx.doi.org/10.1145/503502.503505}
  {\path{doi:10.1145/503502.503505}}.

\bibitem{Kastner-EtAl:ACM-TSE-2012}
C.~K\"{a}stner, S.~Apel, T.~Th\"{u}m, and G.~Saake.
\newblock Type checking annotation-based product lines.
\newblock {\em ACM Trans. Softw. Eng. Methodol.}, 21(3):14:1--14:39, July 2012.
\newblock \href {http://dx.doi.org/10.1145/2211616.2211617}
  {\path{doi:10.1145/2211616.2211617}}.

\bibitem{Krueger02}
C.~Krueger.
\newblock {Eliminating the Adoption Barrier}.
\newblock {\em IEEE Software}, 19(4):29--31, 2002.
\newblock \href {http://dx.doi.org/10.1109/MS.2002.1020284}
  {\path{doi:10.1109/MS.2002.1020284}}.

\bibitem{delta-ra-impl-IFDJ}
https://github.com/gzoumix/IFDJTS.

\bibitem{Liu:2006}
J.~Liu, D.~Batory, and C.~Lengauer.
\newblock Feature oriented refactoring of legacy applications.
\newblock In {\em Proceedings of the 28th International Conference on Software
  Engineering}, ICSE '06, pages 112--121, New York, NY, USA, 2006. ACM.
\newblock \href {http://dx.doi.org/10.1145/1134285.1134303}
  {\path{doi:10.1145/1134285.1134303}}.

\bibitem{Lopez-HerrejonBC05}
R.~E. Lopez-Herrejon, D.~S. Batory, and W.~R. Cook.
\newblock {Evaluating Support for Features in Advanced Modularization
  Technologies}.
\newblock In {\em {Proc.\ of ECOOP 2005}}, volume 3586 of {\em LNCS}, pages
  169--194. Springer, 2005.
\newblock \href {http://dx.doi.org/10.1007/11531142_8}
  {\path{doi:10.1007/11531142_8}}.

\bibitem{SPLC}
I.~Schaefer, L.~Bettini, V.~Bono, F.~Damiani, and N.~Tanzarella.
\newblock {Delta-oriented Programming of Software Product Lines}.
\newblock In {\em Proc.\ of SPLC 2010}, volume 6287 of {\em LNCS}, pages
  77--91. Springer, 2010.
\newblock \href {http://dx.doi.org/10.1007/978-3-642-15579-6_6}
  {\path{doi:10.1007/978-3-642-15579-6_6}}.

\bibitem{FOSD}
I.~Schaefer and F.~Damiani.
\newblock {Pure Delta-oriented Programming}.
\newblock In {\em Proc.\ of FOSD 2010}. ACM, 2010.
\newblock \href {http://dx.doi.org/10.1145/1868688.1868696}
  {\path{doi:10.1145/1868688.1868696}}.

\bibitem{Schulze:2013:RDS:2451436.2451446}
S.~Schulze, O.~Richers, and I.~Schaefer.
\newblock Refactoring delta-oriented software product lines.
\newblock In {\em Proceedings of the 12th Annual International Conference on
  Aspect-oriented Software Development}, AOSD '13, pages 73--84, New York, NY,
  USA, 2013. ACM.
\newblock \href {http://dx.doi.org/10.1145/2451436.2451446}
  {\path{doi:10.1145/2451436.2451446}}.

\end{thebibliography}

\ifappendix
\section{Appendix}\label{sec:appendix}

\begin{center}
\emph{{This appendix is not part of the submission\\ and it is included for referees' convenience only.}}
\end{center}

\bigskip


\input{src/80-proofs}

\fi

\end{document}